\documentclass[9pt,technote]{IEEEtran}

\usepackage{amsfonts}
\usepackage{cite}
\usepackage{graphicx}
\usepackage{psfrag}
\usepackage{subfigure}
\usepackage{stfloats}
\usepackage{amsmath}   
\interdisplaylinepenalty=2500

\usepackage{array}
\hyphenation{op-tical net-works semi-conduc-tor}

\newtheorem{Tm}{Theorem}
\newtheorem{remark}{Remark}
\newtheorem{exmp}{Example}

\newcommand{\tr}[1]{\mathop{{\rm \bf tr}\left[#1\right]}\nolimits}

\newcounter{tempo_eq}

\begin{document}
\title{Likelihood Gradient Evaluation Using Square-Root Covariance Filters}
%
%
\author{M.V.~Kulikova
\thanks{Manuscript received September 14, 2007; revised May 9, 2008.}
\thanks{M.V.~Kulikova is with the School of Computational and Applied
Mathematics, University of the Witwatersrand, South Africa; Email:
Maria.Kulikova@wits.ac.za.} }

\markboth{PREPRINT}{}

\maketitle

\begin{abstract}
Using the array form of numerically stable square-root
implementation methods for Kalman filtering formulas, we construct a
new square-root algorithm for the log-likelihood gradient (score)
evaluation. This avoids the use of the conventional Kalman filter
with its inherent numerical instabilities and improves the
robustness of computations against roundoff errors. The new
algorithm is developed in terms of covariance quantities  and based
on the "condensed form" of the array square-root filter.
\end{abstract}

\begin{keywords}
identification, maximum likelihood estimation, gradient methods,
Kalman filtering, numerical stability.
\end{keywords}

\IEEEpeerreviewmaketitle

\section{Introduction}
\PARstart{C}{onsider}  the discrete-time linear stochastic system
  \begin{eqnarray}
    x_{k} &=& F_k x_{k-1}+G_k w_k,   \label{eq2.1} \\
    z_k &=& H_k x_k+v_k, \quad k=1,\ldots, N, \label{eq2.2}
  \end{eqnarray}
where $x_k \in \mathbb R^n$ and $z_k \in \mathbb R^m$ are,
respectively, the state and the measurement vectors; $k$ is a
discrete time, i.e. $x_k$ means $x(t_k)$. The noises  $w_k \in
\mathbb R^q$, $v_k \in \mathbb R^m$ and the initial state $x_0 \sim
{\cal N}(\bar x_0,\Pi_0)$ are taken from mutually independent
Gaussian distributions with zero mean and covariance matrices $Q_k$
and $R_k$, respectively, i.e. $w_k \sim {\cal N}(0,Q_k)$, $v_k \sim
{\cal N}(0,R_k)$. Additionally, system~(\ref{eq2.1}), (\ref{eq2.2})
is parameterized by a vector of unknown system parameters $\theta
\in \mathbb R^p$, which needs to be estimated. This means that the
entries of the matrices $F_k$, $G_k$, $H_k$, $Q_k$, $R_k$ and
$\Pi_0$ are functions of $\theta \in \mathbb R^p$. However, for the
sake of simplicity we will suppress the corresponding notations
below, i.e instead of $F_k(\theta)$, $G_k(\theta)$, $H_k(\theta)$,
$Q_k(\theta)$, $R_k(\theta)$ and $\Pi_0(\theta)$ we will write
$F_k$, $G_k$, $H_k$, $Q_k$, $R_k$ and $\Pi_0$.

Solving the parameter estimation problem by the method of maximum
likelihood requires the maximization of the likelihood function (LF)
with respect to unknown system parameters. It is often done by using
a gradient approach where the computation of the likelihood gradient
(LG) is necessary. For the state-space system~(\ref{eq2.1}),
(\ref{eq2.2}) the negative Log LF is given as~\cite{Schweppe1965}:
\begin{displaymath}
L_{\theta}\left(Z_1^N\right)= \frac{1}{2} \sum \limits_{k=1}^N
\left\{
 \frac{m}{2}\ln(2\pi)
+\ln\left(\det R_{e,k}\right) + e_k^T R_{e,k}^{-1}e_k \right\}
\end{displaymath}
where $Z_1^N=[z_1,\ldots, z_N]$ is $N$-step measurement history and
$e_k$ are the innovations,  generated by the discrete-time Kalman
filter (KF), with zero mean and covariance matrix $R_{e,k}$. They
are $e_k = z_k-H_k \hat x_{k|k-1}$ and
$R_{e,k}=H_kP_{k|k-1}H_k^T+R_k$, respectively. The KF
 defines the one-step ahead predicted state estimate $\hat x_{k|k-1}$
and the one-step predicted error covariance matrix $P_{k|k-1}$.

Straight forward differentiation of the KF equations is a direct
approach to the Log LG evaluation, known as a "score". This leads to
a set of $p$ vector equations, known as the {\it filter sensitivity
equations}, for computing $\partial{\hat
x_{k|k-1}}/\partial{\theta}$, and a set of $p$ matrix equations,
known as the {\it Riccati-type sensitivity equations}, for computing
$\partial{P_{k|k-1}}/\partial{\theta}$.

Consequently, the main disadvantage of the standard approach is the
problem of numerical instability of the conventional KF, i.e
divergence due to the lack of reliability of the numerical
algorithm. Solution of the matrix Riccati equation is a major cause
of numerical difficulties in the conventional KF implementation,
from the standpoint of computational load as well as from the
standpoint of computational errors~\cite{Grewal2001}.

The alternative approach can be found in, so-called, square-root
filtering algorithms. It is well known that numerical solution of
the Riccati equation tends to be more robust against roundoff errors
if Cholesky factors or modified Cholesky factors (such as the
$U^TDU$-algorithms~\cite{Bierman77}) of the covariance matrix are
used as the dependent variables. The resulting KF implementation
methods are called square-root filters (SRF). They are now generally
preferred for practical use~\cite{Grewal2001},
\cite{KaminskiBryson1971}, \cite{KailathSayed2000}. For more
insights about numerical properties of different KF implementation
methods we refer to the celebrated paper of Verhaegen and Van
Dooren~\cite{VerhaegenDooren1986}.

Increasingly, the preferred form for algorithms in many fields is
now the array form~\cite{Kailath_slides}. Several useful SRF
algorithms for KF formulas formulated in the array form have been
recently proposed by Park and Kailath~\cite{ParkKailath1995}. For
this implementations the reliability of the filter estimates is
expected to be better because of the use of numerically stable
orthogonal transformations for each recursion step. Apart from
numerical advantages, array SRF algorithms appear to be better
suited to parallel and to very large scale integration (VLSI)
implementations~\cite{ParkKailath1995}, \cite{Lee1993}.

The development of numerically stable implementation methods for KF
formulas has led to the hope that the Log LG (with respect to
unknown system parameters) might be computed more accurately. For
this problem, a number of questions arise:
\begin{itemize}
\item  Is it possible to
extend reliable array SRF algorithms to the case of the Log LG
evaluation?
\item If such methods exist, will they inherit the advantages from
the source filtering implementations? In particular, will they
improve the robustness of the computations against roundoff errors
compared to the conventional KF technique? The question about
suitability for parallel implementation is beyond the scope of this
paper.
\end{itemize}

The first attempt to answer these questions belongs to Bierman {\it
et al.}~\cite{BBVP1990}. The authors used the square-root
information filter, developed by Dyer and McReynolds~\cite{Dyer1969}
and later extended by Bierman~\cite{Bierman77}, as a source filter
implementation and constructed the method for score evaluation. The
algorithm was developed in the form of measurement and time updates.
However, the accuracy of the proposed method has not been
investigated.

In contrast to the main result of~\cite{BBVP1990}, we focus on the
dual class of KF implementation methods (that is the class of
covariance-type methods) and discuss the efficient Log LG evaluation
in square-root covariance filters. More precisely, we consider the
array form of the square-root covariance filter eSRCF introduced
in~\cite{ParkKailath1995}. The purpose of this paper is to design
the method for the Log LG evaluation in terms of the square-root
covariance variables, i.e. in terms of the quantities that appear
naturally in the eSRCF. This avoids the use of the conventional KF
with its inherent numerical instabilities and gives us an
opportunity to improve the robustness of the Log LG computation
against roundoff errors.

\begin{figure*}[!t]
\normalsize
\setcounter{tempo_eq}{\value{equation}}
\setcounter{equation}{7}
\begin{IEEEeqnarray}{l}
 Q_k \left[
\begin{array}{cc|c||cc|c}
R_k^{1/2} & 0 & -R_k^{-T/2}z_k \; \;&
\partial_{\theta_i}R_k^{1/2} & 0  &
\partial_{\theta_i}\left(-R_k^{-T/2}z_k\right) \\
P_{k}^{1/2}H_k^T & P_{k}^{1/2}F_k^T \;\; & P_{k}^{-T/2} \hat x_{k} &
\partial_{\theta_i}\left(P_{k}^{1/2}H_k^T\right) &
\partial_{\theta_i}\left(P_{k}^{1/2}F_k^T \right) &
\partial_{\theta_i}\left(P_{k}^{-T/2} \hat x_{k}\right)\\
0 & Q^{1/2}_kG_k^T  & 0 & 0
&\partial_{\theta_i}\left(Q^{1/2}_kG_k^T\right)  & 0
\end{array}
\right]  \nonumber \\
= {\;}\left[
\begin{array}{cc|c||cc|c}
R_{e,k}^{1/2} & \bar K_{p,k}^T  & - \bar e_k & X_i & Y_i & M_i \\
 0 & P_{k+1}^{1/2}  &
P_{k+1}^{-T/2} \hat x_{k+1} & N_i & V_i &  W_i  \\
\phantom{G^T_k} 0\phantom{G^T_k} &
\phantom{Q^{1/2}_k}0\phantom{G^T_k}  & \gamma_k & B_i & K_i & T_i
\end{array}
\right]. \label{gr-esrcf:1}
\end{IEEEeqnarray}
\setcounter{equation}{\value{tempo_eq}} \hrulefill \vspace*{4pt}
\end{figure*}

\section{Extended Square-Root Covariance Filter \label{eSRCF}}

To achieve our goal, we are first going to present the extended
square-root covariance filter (eSRCF), proposed
in~\cite{ParkKailath1995}, and second, we will derive the expression
for the Log LG evaluation in terms of the variables that are
generated by the eSRCF implementation.

{\it Notations to be used:} For the sake of simplicity, we denote
the one-step predicted state estimate as $\hat x_k$ and the one-step
predicted error covariance matrix as $P_k$. We use Cholesky
decomposition of the form $P_k=P_k^{T/2} P_k^{1/2}$, where
$P_k^{1/2}$ is an upper triangular matrix. Similarly, we define
$R_k^{1/2}$, $Q_k^{1/2}$, $R_{e,k}^{1/2}$. For convenience we will
write $A^{-1/2}=(A^{1/2})^{-1}$, $A^{-T/2}=(A^{-1/2})^T$ and
$\partial_{\theta_i}{A}$ implies the partial derivative of the
matrix $A$ with respect to the $i$th component of $\theta$, i.e
$\partial{A}/\partial{\theta_i}$.

In this paper, we deal with the "condensed form"\footnote{The
"condensed form" of filtering algorithms refers to the case when
implementation method for the KF formulas is not divided into the
measurement and time updates.} of the eSRCF~\cite{ParkKailath1995}:
Assume that $R_k
>0$. Given $\Pi_0^{1/2}$ and $\Pi_0^{-T/2}\bar x_0$, recursively update $P_{k}^{1/2}$ and
$P_{k}^{-T/2} \hat x_{k}$ as follows:
\begin{IEEEeqnarray}{r}
  Q_k \left[
\begin{array}{cc|c}
R_k^{1/2} & 0 & \ -R_k^{-T/2}z_k \\
P_k^{1/2}H_k^T & P_k^{1/2}F_k^T &
P_k^{-T/2} \hat x_k \\
0 & Q^{1/2}_kG_k^T &  0
\end{array}
\right] \nonumber \\
= \left[ \begin{array}{cc|c}
R_{e,k}^{1/2} & \bar K_{p,k}^T  & -\bar e_k \\
0 & P_{k+1}^{1/2} &  P_{k+1}^{-T/2} \hat x_{k+1} \\
0 & 0 &  \gamma_k
\end{array}
\right] \label{eq-esrcf}
\end{IEEEeqnarray}
where $Q_k$ is any orthogonal rotation that upper-triangularizes the
first two (block) columns of the matrix on the left-hand side
of~(\ref{eq-esrcf}); $\bar K_{p,k}=F_kP_{k} H_k^TR_{e,k}^{-1/2}$ and
$\bar e_k=R_{e,k}^{-T/2} e_k$.

One can easily obtain the expression for the negative Log LF in
terms of the eSRCF variables:
\begin{equation}
\label{eq:llfsqrt} L_{\theta}\left(Z_1^N\right)= \frac{1}{2} \sum
\limits_{k=1}^N \left\{
 \frac{m}{2}\ln(2\pi)
+ 2\ln\left(\det R_{e,k}^{1/2}\right) + \bar e_k^T \bar e_k
\right\}.
\end{equation}

Let $\theta=[\theta_1, \ldots, \theta_p]$ denote the vector of
parameters with respect to which the likelihood function is to be
differentiated. Then from~(\ref{eq:llfsqrt}), we have
\begin{equation}
\partial_{\theta_i} L_{\theta}\left(Z_1^N \right)= \sum
\limits_{k=1}^N  \left\{
  \partial_{\theta_i} \left[ \ln\left(\det R_{e,k}^{1/2}\right)\right] +
\frac{1}{2} \partial_{\theta_i} \left[ \bar e_k^T \bar e_k \right]
\right\}. \label{eq:llgsqrt}
\end{equation}

Taking into account that the matrix $R_{e,k}^{1/2}$ is upper
triangular, we derive
\begin{IEEEeqnarray}{rcl}
\partial_{\theta_i} \left[ \ln\left(\det R_{e,k}^{1/2}\right)\right] &
= &
\partial_{\theta_i} \left[
\sum \limits_{j=1}^{m} \ln \left( r_{e,k}^{jj} \right) \right]
 =  \sum \limits_{j=1}^{m} \left[ \frac{1}{r_{e,k}^{jj}} \cdot
\partial_{\theta_i} r_{e,k}^{jj} \right]  \nonumber \\
& = & \tr { R_{e,k}^{-1/2} \cdot
\partial_{\theta_i} R_{e,k}^{1/2}}, \hspace{2em} i=1, \ldots, p  \label{grad_det}
\end{IEEEeqnarray}
where the $r_{e,k}^{jj}$, $j=1, \ldots, m$ denote the diagonal
elements of the matrix $R_{e,k}^{1/2}$.

Substitution of~(\ref{grad_det}) into~(\ref{eq:llgsqrt}) yields the
result that we are looking for
\begin{equation}
  \label{grad-esrcf}
\partial_{\theta_i} L_{\theta}\left(Z_1^N
\right)=\sum \limits_{k=1}^N  \left\{
  \tr { R_{e,k}^{-1/2} \cdot
\partial_{\theta_i} R_{e,k}^{1/2}} \  + \
  \bar e_k^T \cdot \partial_{\theta_i}\bar e_k  \right\}.
\end{equation}

Ultimately, our problem is to compute Log LG~(\ref{grad-esrcf}) by
using the eSRCF equation~(\ref{eq-esrcf}). Before we come to the
main result of this paper, there are a few points to be considered.
As can be seen from~(\ref{grad-esrcf}), the elements $\bar e_k$ and
$R_{e,k}^{1/2}$ involved in the Log LG evaluation are obtained from
the underlying filtering algorithm directly, i.e.
from~(\ref{eq-esrcf}). No additional computations are needed. Hence,
our aim is to explain how the last two terms in the Log LG
expression, $\partial_{\theta_i}{\bar e_k}$ and
$\partial_{\theta_i}{R_{e,k}^{1/2}}$, can be computed using
quantities available from eSRCF~(\ref{eq-esrcf}).

\section{Suggested Square-Root Method for Score Evaluation \label{score-complex}}

We can now prove the following result.
\begin{Tm}
\label{theorem:1} Let the entries of the matrices $F_k$, $G_k$,
$H_k$, $Q_k$, $R_k$, $\Pi_0$ describing the linear discrete-time
stochastic system~(\ref{eq2.1}), (\ref{eq2.2}) be differentiable
functions of a parameter $\theta \in \mathbb R^p$. Then in order to
compute the Log LF and its gradient (with respect to unknown system
parameter $\theta$) the eSRCF, which is used to filter the data,
needs to be extended as follows. Assume that $R_k
>0$. Given the initial values $\Pi_0^{1/2}$, $\Pi_0^{-T/2}\bar x_0$
and $\partial_{\theta_i}{\Pi_0^{1/2}}$,
$\partial_{\theta_i}{\left(\Pi_0^{-T/2}\bar x_0\right)}$,
recursively update $P_{k}^{1/2}$, $P_{k}^{-T/2} \hat x_{k}$ and
$\partial_{\theta_i} P_{k}^{1/2}$, $\partial_{\theta_i} \left(
P_{k}^{-T/2} \hat x_{k}\right)$ as follows:
\begin{IEEEenumerate}
\item[{\bf I.}] Replace the eSRCF equation~(\ref{eq-esrcf})
by~(\ref{gr-esrcf:1}) where $Q_k$ is any orthogonal rotation that
upper-triangularizes the first two (block) columns of the matrix on
the left-hand side of~(\ref{gr-esrcf:1}).
\item[{\bf II.}] Having computed the elements of the
right-hand side matrix in~(\ref{gr-esrcf:1}), calculate  for each
$\theta_i$: \addtocounter{equation}{1}
\begin{equation}
\label{gr-esrcf:3} \left[
\begin{IEEEeqnarraybox}[][c]{c/c}
\partial_{\theta_i}{R_{e,k}^{1/2}} & \partial_{\theta_i}{\bar K_{p,k}^T}\\
0 &
\partial_{\theta_i}{P_{k+1}^{1/2}}
\end{IEEEeqnarraybox}
\right]= \Bigl[ \bar L_i^T+D_i+\bar U_i \Bigl] \left[
\begin{IEEEeqnarraybox}[][c]{c/c/c}
R_{e,k}^{1/2} & \bar K_{p,k}^T   \\
0 & P_{k+1}^{1/2}
\end{IEEEeqnarraybox}
\right],
\end{equation}
\setlength{\arraycolsep}{0.14em}
\begin{IEEEeqnarray}{r}
 \left[
\begin{IEEEeqnarraybox}[][c]{c}
-\partial_{\theta_i}{\bar e_k} \\
\partial_{\theta_i}{\left(P_{k+1}^{-T/2}\hat x_{k+1} \right) }
\end{IEEEeqnarraybox} \right]   =
\Bigl[ \bar L^T_i -\bar L_i \Bigl] \left[
\begin{IEEEeqnarraybox}[][c]{c}
-\bar e_k\\
 P_{k+1}^{-T/2}\hat x_{k+1}
\end{IEEEeqnarraybox}
\right] \nonumber \\
 +  \left[
\begin{IEEEeqnarraybox}[][c]{cc}
R_{e,k}^{1/2} \ & \  \bar K_{p,k}^T  \\
0 & P_{k+1}^{1/2}
\end{IEEEeqnarraybox}
\right]^{-T} \left[
\begin{IEEEeqnarraybox}[][c]{c}
B_i \\ K_i
\end{IEEEeqnarraybox}
\right] \gamma_k + \left[
\begin{IEEEeqnarraybox}[][c]{c}
M_i\\
W_i
\end{IEEEeqnarraybox}
\right] \label{gr-esrcf:4}
\end{IEEEeqnarray}
\setlength{\arraycolsep}{5pt} where $\bar L_i$, $D_i$ and $\bar U_i$
are strictly lower triangular, diagonal and strictly upper
triangular parts of the following matrix product:
\begin{equation}
\label{gr-esrcf:2} \!\!\! \left[
\begin{IEEEeqnarraybox}[][c]{c/c}
X_i & Y_i  \\
N_i & V_i
\end{IEEEeqnarraybox}
\right] \left[
\begin{IEEEeqnarraybox}[][c]{c/c}
R_{e,k}^{-1/2} & -R_{e,k}^{-1/2} \bar K_{p,k}^T P_{k+1}^{-1/2} \\
0 & P_{k+1}^{-1/2}
\end{IEEEeqnarraybox}
\right]= \bar L_i+D_i+\bar U_i.
\end{equation}
\item[{\bf III.}] Having determined $\bar e_k$,
$R_{e,k}^{1/2}$ and $\partial_{\theta_i} \bar e_k $,
$\partial_{\theta_i} R_{e,k}^{1/2}$ compute Log
LF~(\ref{eq:llfsqrt}) and Log LG~(\ref{grad-esrcf}).
\end{IEEEenumerate}
\end{Tm}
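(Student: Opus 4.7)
The plan is to differentiate the eSRCF identity (\ref{eq-esrcf}), which I write compactly as $Q_k A_k = B_k$, with respect to $\theta_i$ and then exploit the orthogonality of $Q_k$. Applying $\partial_{\theta_i}$ to both sides and rearranging gives
\begin{equation*}
\partial_{\theta_i} B_k \;=\; \Omega_{i,k}\,B_k \;+\; Q_k\,\partial_{\theta_i} A_k, \qquad \Omega_{i,k}:=(\partial_{\theta_i} Q_k)\,Q_k^T,
\end{equation*}
and the crucial observation is that $Q_k Q_k^T=I$ forces $\Omega_{i,k}$ to be skew-symmetric. This identity justifies the augmented array (\ref{gr-esrcf:1}) at once: because the very same orthogonal rotation $Q_k$ is applied to the enlarged pre-array $[A_k \mid \partial_{\theta_i}A_k]$, the first three (block) columns of the output reproduce the usual post-array $B_k$, while the last three collect the nine blocks $X_i,Y_i,M_i,N_i,V_i,W_i,B_i,K_i,T_i$, which are exactly $Q_k\,\partial_{\theta_i}A_k$. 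The entries of $\partial_{\theta_i}A_k$ are known analytically, since every block of $A_k$ is an explicit function of the differentiable data $F_k,G_k,H_k,Q_k,R_k,z_k$ and of the already-computed quantities $P_k^{1/2}$, $P_k^{-T/2}\hat x_k$ propagated from the previous step.

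Next I would extract (\ref{gr-esrcf:3}) from the upper-left $(m+n)\times(m+n)$ block of the identity, which reads
\begin{equation*}
\partial_{\theta_i}\!\begin{bmatrix} R_{e,k}^{1/2} & \bar K_{p,k}^T \\ 0 & P_{k+1}^{1/2}\end{bmatrix} = \Omega_{i,k}^{(1)}\!\begin{bmatrix} R_{e,k}^{1/2} & \bar K_{p,k}^T \\ 0 & P_{k+1}^{1/2}\end{bmatrix} + \begin{bmatrix} X_i & Y_i \\ N_i & V_i\end{bmatrix},
\end{equation*}
where $\Omega_{i,k}^{(1)}$ is the $(m+n)\times(m+n)$ diagonal block of $\Omega_{i,k}$ and inherits skew-symmetry. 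Post-multiplying by the inverse of the upper-triangular Cholesky factor reduces the right-hand side to $\Omega_{i,k}^{(1)}+(\bar L_i+D_i+\bar U_i)$ in the notation of (\ref{gr-esrcf:2}). The left-hand side is upper triangular, since derivatives of Cholesky factors are upper triangular. Matching the strictly lower parts forces $\Omega_{i,k}^{(1)}=\bar L_i^T-\bar L_i$; the diagonal contributes $D_i$ and the strict upper contributes $\bar L_i^T+\bar U_i$. Multiplying back on the right then delivers (\ref{gr-esrcf:3}).

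The remaining step is (\ref{gr-esrcf:4}), obtained from the top $m+n$ rows of the third column of the same identity. Beyond the already-determined $\Omega_{i,k}^{(1)}$, this brings in the off-diagonal block $\Omega_{i,k}^{(2)}$ coupling the first $m+n$ rows to the last $q$ rows of $\Omega_{i,k}$. I pin $\Omega_{i,k}^{(2)}$ down from the bottom $q$ rows of the first two block columns of the identity: because $B_k$ is zero there, those rows together with skew-symmetry yield $\Omega_{i,k}^{(2)}$ as a closed-form expression involving $B_i$, $K_i$ and the inverse transpose of the same Cholesky factor. Substituting $\Omega_{i,k}^{(1)}=\bar L_i^T-\bar L_i$ and this $\Omega_{i,k}^{(2)}$ into the third-column equation, and noting that $\Omega_{i,k}^{(2)}\gamma_k$ produces precisely the middle term of (\ref{gr-esrcf:4}), completes the derivation. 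The main obstacle is the bookkeeping: one must correctly partition $\Omega_{i,k}$, match the triangular structure of $B_k$ against the decomposition $\bar L_i,D_i,\bar U_i$, and separately use the zero rows of $B_k$ to identify the inter-block part, all without ever inverting a Riccati-type quantity.
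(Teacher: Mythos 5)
Your argument is correct and follows essentially the same route as the paper: differentiate the orthogonal triangularization identity, use the skew-symmetry of $(\partial_{\theta_i}Q_k)\,Q_k^T$ together with the upper-triangular structure of the post-array to identify its leading $(m+n)\times(m+n)$ block as $\bar L_i^T-\bar L_i$, and use the structurally zero bottom rows of the post-array to pin down the off-diagonal coupling block in terms of $B_i$, $K_i$. The only cosmetic difference is that the paper handles the non-square pre-array by padding with zero columns and invoking the Moore--Penrose pseudoinverse, whereas you post-multiply the relevant block directly by the inverse of the invertible triangular factor; the computations are identical.
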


\begin{proof}
As discussed earlier, the main difficulty in score
evaluation~(\ref{grad-esrcf}) is to define $\partial_{\theta_i}
R_{e,k}^{1/2}$ and $\partial_{\theta_i} \bar e_k $  from the
underlying filter, i.e. from~(\ref{eq-esrcf}). We divide the proof
into two parts, first proving~(\ref{gr-esrcf:3}) for the
$\partial_{\theta_i} R_{e,k}^{1/2}$ evaluation and then
validating~(\ref{gr-esrcf:4}) for $\partial_{\theta_i} \bar e_k $.

{\bf Part I.} Our goal is to express  $\partial_{\theta_i}
R_{e,k}^{1/2}$ in terms of the variables that appear naturally in
the eSRCF implementation. First, we can note that the eSRCF
transformation in~(\ref{eq-esrcf}) has a form
\begin{displaymath}
QA=B
\end{displaymath} where $A$ is a rectangular matrix, and $Q$ is
an orthogonal transformation that block upper-triangularizes $B$. If
matrix $A$ is square and invertible, then given the matrix of
derivatives $A'_{\theta}=\displaystyle\frac{da_{ij}}{d\theta}$ we
can compute $B'_{\theta}$ as follows~\cite{BBVP1990}:
\begin{equation}
\label{bierman_sxema}
 B'_{\theta}=\left[ L^T+D+U \right] \: B
\end{equation}
where  $L$, $D$ and $U$ are, respectively, strictly lower
triangular, diagonal and strictly upper  triangular parts of the
matrix $QA'_{\theta}B^{-1}$.

However, this idea cannot be applied to the eSRCF because the matrix
to be triangularized, i.e. the first two (block) columns of the
matrix on the left-hand side of~(\ref{eq-esrcf}), is not square and,
hence, not invertible. By using the pseudoinversion (Moore-Penrose
inversion) we avoid this obstacle and generalize the scheme of
computations~(\ref{bierman_sxema}) to the case of
eSRCF~(\ref{eq-esrcf}).

To begin constructing the method for score evaluation, we augment
the matrix to be triangularized by $q$ columns of zeros. Hence, we
obtain
 \begin{equation}
  \label{grad:1}
\! Q_k \left[
\begin{array}{cc|c}
R_k^{1/2} & 0 & 0  \\
P_k^{1/2}H_k^T & P_k^{1/2}F_k^T &  0  \\
0 & Q^{1/2}_kG_k^T &  0
\end{array} \right] = \left[
\begin{array}{cc|c}
R_{e,k}^{1/2} & \bar K_{p,k}^T  &  0  \\
0 & P_{k+1}^{1/2}  &   0  \\
0 & 0 &  0 \end{array} \right].
\end{equation}

The matrices in~(\ref{grad:1}) have dimensions
$(m+n+q)\times(m+n+q)$. For the sake of simplicity, we denote the
left-hand side and the right-hand side matrices of~(\ref{grad:1}) as
$A_k$ and $B_k$, respectively. Then, by
differentiating~(\ref{grad:1}) with respect to the components of
$\theta$, we obtain
\begin{equation}
 \label{grad:2}
\partial_{\theta_i}Q_k \cdot A_k + Q_k \cdot \partial_{\theta_i}A_k =
\partial_{\theta_i}B_k.
\end{equation}

Multiplication both sides of~(\ref{grad:2}) by the pseudoinverse
matrix $B_k^+$ yields
\setlength{\arraycolsep}{0.14em}
\begin{IEEEeqnarray}{lcl}
 \partial_{\theta_i}B_k \cdot B_k^{+} & = &
\partial_{\theta_i}Q_k \left( A_k B_k^{+}\right) + Q_k \cdot \partial_{\theta_i}A_k \cdot
B_k^{+} \nonumber \\
&  = &
\partial_{\theta_i}Q_k \left( Q_k^T  B_k B_k^{+}\right) + \left( Q_k \cdot \partial_{\theta_i}A_k\right)
B_k^{+}. \label{grad:3}
\end{IEEEeqnarray}
\setlength{\arraycolsep}{5pt}

One can easily obtain the explicit expression for $B_k^{+}$:
\begin{equation}
\label{pseudoinverse} B_k^{+}=
\left[
\begin{IEEEeqnarraybox}[][c]{c/c/c}
R_{e,k}^{-1/2} & -R_{e,k}^{-1/2} \bar K_{p,k}^T P_{k+1}^{-1/2} & 0    \\
0 & P_{k+1}^{-1/2} &  0  \\
0 & 0 &   0
\end{IEEEeqnarraybox}
\right].
\end{equation}

By using~(\ref{gr-esrcf:1}), we replace $Q_k \cdot
\partial_{\theta_i}A_k$ in~(\ref{grad:3}) by the quantities already computed. Then,
 taking into account~(\ref{pseudoinverse}), we derive
the equation for the $(m+n)\times (m+n)$ main block of the matrix
$B_k$:
\setlength{\arraycolsep}{0.14em}
\begin{IEEEeqnarray}{r}
 \left[
\begin{IEEEeqnarraybox}[][c]{c/c}
\partial_{\theta_i} R_{e,k}^{1/2} & \partial_{\theta_i} \bar K_{p,k}^T  \\
0 & \partial_{\theta_i} P_{k+1}^{1/2}
\end{IEEEeqnarraybox} \right]
\left[
\begin{IEEEeqnarraybox}[][c]{c/c}
 R_{e,k}^{1/2} &  \bar K_{p,k}^T  \\
0 & P_{k+1}^{1/2}
\end{IEEEeqnarraybox} \right]^{-1}
  =   \left[ \partial_{\theta_i} Q_k \times Q_k^T \right]_{m+n}
\nonumber \\
\label{grad:4} +  \left[
\begin{IEEEeqnarraybox}[][c]{c/c}
X_i & Y_i \\
N_i & V_i
\end{IEEEeqnarraybox}
\right] \left[
\begin{IEEEeqnarraybox}[][c]{c/c}
 R_{e,k}^{1/2} & \bar K_{p,k}^T  \\
0 & P_{k+1}^{1/2}
\end{IEEEeqnarraybox} \right]^{-1}
\end{IEEEeqnarray}
\setlength{\arraycolsep}{5pt}
 where $\left[ \partial_{\theta_i} Q_k
\cdot Q_k^T \right]_{m+n}$ denotes the $(m+n)\times (m+n)$ main
block of the matrix $\partial_{\theta_i} Q_k \cdot Q_k^T$.

As discussed in~\cite{BBVP1990}, the matrix $\partial_{\theta_i} Q_k
\cdot Q_k^T$ is skew symmetric and, hence, can be represented in the
form $\bar L^T - \bar L$ where $\bar L$ is strictly lower
triangular.

Now, let us consider matrix equation~(\ref{grad:4}). As can be seen,
the matrix on the left-hand side of~(\ref{grad:4}) is block upper
triangular. Thus, the strictly lower triangular part of the matrix
$\left[\partial_{\theta_i} Q_k \cdot Q_k^T \right]_{m+n}$ must
exactly cancel the strictly lower triangular part of the second term
on the right-hand side of~(\ref{grad:4}). In other words, if
\begin{displaymath}
 \left[
\begin{IEEEeqnarraybox}[][c]{c/c}
X_i & Y_i \\
N_i & V_i
\end{IEEEeqnarraybox}
\right] \left[
\begin{IEEEeqnarraybox}[][c]{c/c}
 R_{e,k}^{1/2} & \bar K_{p,k}^T  \\
0 & P_{k+1}^{1/2}
\end{IEEEeqnarraybox} \right]^{-1} = \bar L_i + D_i + \bar U_i,
\end{displaymath} then
\begin{equation}
\label{grad:5}
 \left[
\partial_{\theta_i} Q_k \cdot Q_k^T \right]_{m+n} =\bar L_i^T-\bar
L_i.
\end{equation}

Substitution of~(\ref{grad:5}) into~(\ref{grad:4}) leads to the
result
\begin{equation}
\label{grad:7}
 \left[
\begin{IEEEeqnarraybox}[][c]{c/c}
\partial_{\theta_i} R_{e,k}^{1/2} & \partial_{\theta_i} \bar K_{p,k}^T  \\
0 & \partial_{\theta_i} P_{k+1}^{1/2}
\end{IEEEeqnarraybox} \right]
= \left[ \bar L_i^T+ D_i+\bar U_i \right] \left[
\begin{IEEEeqnarraybox}[][c]{c/c}
 R_{e,k}^{1/2} &  \bar K_{p,k}^T  \\
0 &  P_{k+1}^{1/2}
\end{IEEEeqnarraybox} \right].
\end{equation}

Formulas~(\ref{grad:7}) and (\ref{grad:5}) are, in fact,
equations~(\ref{gr-esrcf:3}) and (\ref{gr-esrcf:2}) of the proposed
method for score evaluation. The theorem is half proved.

{\bf Part II.} We need to verify~(\ref{gr-esrcf:4}). By
differentiating the last equation of the eSRCF with respect to the
components of $\theta$
$$
Q_k \left[
\begin{IEEEeqnarraybox}[][c]{c}
-R_k^{-T/2}z_k \\
 P_k^{-T/2} \hat x_k \\
 0
\end{IEEEeqnarraybox}
\right] = \left[
\begin{IEEEeqnarraybox}[][c]{c}
- \bar e_k \\
 P_{k+1}^{-T/2}\hat x_{k+1} \\
\gamma_k
\end{IEEEeqnarraybox}
\right]
$$
we obtain
\setlength{\arraycolsep}{0.14em}
\begin{IEEEeqnarray}{r}
 \left[
\begin{IEEEeqnarraybox}[][c]{c}
-\partial_{\theta_i}{\bar e_k } \\
\partial_{\theta_i}{\left( P_{k+1}^{-T/2}\hat x_{k+1} \right) } \\
\partial_{\theta_i}{\gamma_k }
\end{IEEEeqnarraybox}
\right]  =  \partial_{\theta_i}{Q_k}\cdot Q_k^T \cdot Q_k \left[
\begin{IEEEeqnarraybox}[][c]{c}
-R_k^{-T/2}z_k \\
 P_k^{-T/2} \hat x_k \\
 0
\end{IEEEeqnarraybox}
\right]\phantom{.} \nonumber \\
\label{form:2*}  + Q_k \left[
\begin{IEEEeqnarraybox}[][c]{c}
-\partial_{\theta_i}{\left( R_k^{-T/2}z_k \right) }\\
\phantom{-}\partial_{\theta_i}{\left( P_k^{-T/2} \hat x_k  \right) } \\
 0
\end{IEEEeqnarraybox}
\right].
\end{IEEEeqnarray}
\setlength{\arraycolsep}{5pt}

Next, we replace the last term in~(\ref{form:2*}) with the
quantities already computed and collected in the right-hand side
matrix of~(\ref{gr-esrcf:1}). Furthermore, it is useful to note that
the element $\partial_{\theta_i}{\gamma_k}$ is of no interest here.
These two steps give us
\setlength{\arraycolsep}{0.14em}
\begin{IEEEeqnarray}{lcl}
\!\!\!\left[
\begin{IEEEeqnarraybox}[][c]{c}
-\partial_{\theta_i}{\bar e_k} \\
\partial_{\theta_i}{\left( P_{k+1}^{-T/2}\hat x_{k+1} \right) }
\end{IEEEeqnarraybox}
\right] & = & \Bigl[\partial_{\theta_i}{Q_k}\cdot Q_k^T\Bigl]_{m+n}
\left[
\begin{IEEEeqnarraybox}[][c]{c}
-\bar e_k\\
P_{k+1}^{-T/2}\hat x_{k+1}
\end{IEEEeqnarraybox}
\right] \nonumber \\
\label{form:2} & + & \Bigl[\partial_{\theta_i}{Q_k}\cdot
Q_k^T\Bigl]_{col:~last~q}^{row:~1:m+n} \gamma_k + \left[
\begin{IEEEeqnarraybox}[][c]{c}
M_i\\
W_i
\end{IEEEeqnarraybox}
\right]
\end{IEEEeqnarray}
\setlength{\arraycolsep}{5pt} where
$\Bigl[\partial_{\theta_i}{Q_k}\cdot
Q_k^T\Bigl]_{col:~last~q}^{row:~1:m+n}$ stands for the $(m+n)\times
q$ matrix composed of the entries that are located at the
intersections of the last $q$ columns with the first $m+n$ rows of
$\partial_{\theta_i}{Q_k}\cdot Q_k^T$.

Taking into account~(\ref{grad:5}), from the equation above we
obtain \setlength{\arraycolsep}{0.14em}
\begin{IEEEeqnarray}{lcl}
\!\! \left[
\begin{IEEEeqnarraybox}[][c]{c}
-\partial_{\theta_i}{\bar e_k} \\
\partial_{\theta_i}{\left( P_{k+1}^{-T/2}\hat x_{k+1} \right) }
\end{IEEEeqnarraybox}
\right]  & = & \Bigl[\bar L_i^T - \bar L_i \Bigl] \left[
\begin{IEEEeqnarraybox}[][c]{c}
-\bar e_k\\
P_{k+1}^{-T/2}\hat x_{k+1}
\end{IEEEeqnarraybox}
\right] \nonumber \\
\label{grad:8} & + & \Bigl[\partial_{\theta_i}{Q_k}\cdot
Q_k^T\Bigl]_{col:~last~q}^{row:~1:m+n} \gamma_k + \left[
\begin{IEEEeqnarraybox}[][c]{c}
M_i\\
W_i
\end{IEEEeqnarraybox}
\right]
\end{IEEEeqnarray}
\setlength{\arraycolsep}{5pt}
 where $\bar L_i$ is strictly lower
triangular part of the matrix in~(\ref{gr-esrcf:2}).

Since $\partial_{\theta_i}{Q_k}\cdot Q_k^T$ is skew symmetric, we
can write down
\begin{equation}
\label{new_part}
 \Bigl[\partial_{\theta_i}{Q_k}\cdot
Q_k^T\Bigl]_{col:~last~q}^{row:~1:m+n}=
-\left[\Bigl[\partial_{\theta_i}{Q_k}\cdot
Q_k^T\Bigl]_{col:~1:m+n}^{row:~last~q}\right]^T
\end{equation}
where $\Bigl[\partial_{\theta_i}{Q_k}\cdot
Q_k^T\Bigl]_{col:~1:m+n}^{row:~last~q}$ stands for the
$q\times(m+n)$ matrix composed of the entries that are located at
the intersections of the last $q$ rows with the first $(m+n)$
columns of $\partial_{\theta_i}{Q_k}\cdot Q_k^T$.

To evaluate the right-hand side of~(\ref{new_part}), we return
to~(\ref{grad:3}) and write it in the matrix form:
\setlength{\arraycolsep}{0.14em}
\begin{IEEEeqnarray}{r}
 \left[
\begin{IEEEeqnarraybox}[][c]{c/c/c}
\partial_{\theta_i} R_{e,k}^{1/2} & \partial_{\theta_i} \bar K_{p,k}^T  & 0 \\
0 & \partial_{\theta_i} P_{k+1}^{1/2} & 0 \\
0 & 0 & 0
\end{IEEEeqnarraybox} \right] \left[
\begin{IEEEeqnarraybox}[][c]{c/c/c}
 R_{e,k}^{1/2} \; & \; \bar K_{p,k}^T \; & \;  0  \\
0 \; & \;  P_{k+1}^{1/2}\;  & \; 0 \\
0 \; & \; 0 \; & \; 0
\end{IEEEeqnarraybox} \right]^{+}
  =    \partial_{\theta_i} Q_k \cdot Q_k^T \nonumber \\
 \label{matrix_form} +  \left[
\begin{IEEEeqnarraybox}[][c]{c/c/c}
X_i \; &\; Y_i\; &\; 0 \\
N_i \; &\; V_i\; &\; 0\\
B_i \; &\; K_i\; &\; 0
\end{IEEEeqnarraybox}
\right] \left[
\begin{IEEEeqnarraybox}[][c]{c/c/c}
 R_{e,k}^{1/2} \; &  \; \bar K_{p,k}^T \;  &\;  0\\
0 \; & \; P_{k+1}^{1/2} \;  &\; 0 \\
0 \; & \;  0 \;  &\; 0
\end{IEEEeqnarraybox} \right]^{+}.
\end{IEEEeqnarray}
\setlength{\arraycolsep}{5pt}

As can be seen, the last (block) row of the left-hand side matrix
in~(\ref{matrix_form}) is zero. Thus, the last (block) row of the
matrix $\partial_{\theta_i} Q_k \cdot Q_k^T$ must exactly cancel the
last (block) row of the second term in~(\ref{matrix_form}):
\begin{equation}
\label{new_part:1} \Bigl[\partial_{\theta_i}{Q_k}\cdot
Q_k^T\Bigl]_{col:~1:m+n}^{row:~last~q} = -\left[
\begin{IEEEeqnarraybox}[][c]{c/c} B_i & K_i
\end{IEEEeqnarraybox}
\right] \left[
\begin{IEEEeqnarraybox}[][c]{c/c}
 R_{e,k}^{1/2} & \bar K_{p,k}^T \\
0 & P_{k+1}^{1/2}
\end{IEEEeqnarraybox} \right]^{-1}.
\end{equation}

By substituting~(\ref{new_part:1}) into~(\ref{new_part}), we obtain
\begin{equation}
\label{form:1} \Bigl[\partial_{\theta_i}{Q_k}\cdot
Q_k^T\Bigl]_{col:~last~q}^{row:~1:m+n} =
  \left[
\begin{IEEEeqnarraybox}[][c]{cc}
R_{e,k}^{1/2} & \bar K_{p,k}^T  \\
0 & P_{k+1}^{1/2}
\end{IEEEeqnarraybox}
\right]^{-T} \left[
\begin{IEEEeqnarraybox}[][c]{c}
B_i \\ K_i
\end{IEEEeqnarraybox}
\right].
\end{equation}

Final substitution of~(\ref{form:1}) into ~(\ref{grad:8})
validates~(\ref{gr-esrcf:4}) of the proposed method for the Log LG
evaluation. This completes the proof.
\end{proof}

\begin{remark}
\label{rem:1} The method for score evaluation introduced above has
been derived from the eSRCF implementation. As a consequence, the
proposed method is of covariance-type.
\end{remark}

\begin{remark}
The new square-root algorithm for score evaluation naturally extends
the eSRCF filter and, hence, consists of two parts. They are the
"filtered"  and "differentiated" parts. This structure allows the
Log LF and its gradient to be computed simultaneously. Thus, the
method is ideal for simultaneous state estimation and parameter
identification.
\end{remark}

\begin{remark}
In the KF formulation of the Log LG evaluation, it is necessary to
run the "differentiated" KF for each of the parameters $\theta_i$ to
be estimated. As in~\cite{BBVP1990}, in the eSRCF formulation this
"bank" of filters is replaced with the augmented arrays to which
orthogonal transformations are applied.
\end{remark}

\begin{table*}
\renewcommand{\arraystretch}{1.3}
\caption{Effect of roundoff errors on the computed solutions for the
set of test problems from Example~\ref{ex:3}} \label{tab:1}
\centering
\begin{tabular}{c|c||c|c|c|c||c|c|c|c}
\hline \multicolumn{2}{c||}{Problem conditioning} &
\multicolumn{4}{|c||}{Conventional KF technique} &
\multicolumn{4}{|c}{Suggested square-root method} \\
\hline $\delta$ & $K(R_{e,1})$ & $\Delta  P_1$ &  $\Delta P'_1$ &
$\Delta Log LF$ & $\Delta Log LG$ & $\Delta  P_1$ &  $\Delta P'_1$ &
$\Delta Log LF$ & $\Delta Log LG$ \\
\hline
 $ 10^{-2\phantom{0}}$ & $10^{3\phantom{0}}$ & $1 \cdot 10^{-13}$           & $1 \cdot 10^{-10}$            & $2 \cdot 10^{-13}$              & $1\cdot 10^{-13}$          & $4\cdot 10^{-15}$           & $7\cdot 10^{-16}$               &       $1\cdot 10^{-13}$   &$9\cdot 10^{-14}$ \\
 $ 10^{-4\phantom{0}}$ & $10^{7\phantom{0}}$ & $5 \cdot 10^{-10}$           & $9 \cdot 10^{-4\phantom{0}}$  & $4 \cdot 10^{-9\phantom{0}}$    & $1\cdot 10^{-9\phantom{0}}$& $4\cdot 10^{-13}$           & $7\cdot 10^{-14}$               & $6\cdot 10^{-10}$ & $7\cdot 10^{-10}$ \\
 $ 10^{-6\phantom{0}}$ & $10^{11}$           & $2 \cdot 10^{-6\phantom{0}}$ & $2 \cdot 10^{-1\phantom{0}}$   &  $2 \cdot 10^{-5\phantom{0}}$  & $6\cdot 10^{-6\phantom{0}}$& $3\cdot 10^{-11}$           & $1\cdot 10^{-11}$               & $9\cdot 10^{-6\phantom{0}}$ &  $4\cdot 10^{-6\phantom{0}}$ \\
 $ 10^{-8\phantom{0}}$ & $10^{15}$           & $3 \cdot 10^{-3\phantom{0}}$ & $2 \cdot 10^{-1\phantom{0}}$   &  $3 \cdot 10^{-1\phantom{0}}$  & $2\cdot 10^{-2\phantom{0}}$& $3\cdot 10^{-10}$           & $2\cdot 10^{-10}$               & $2\cdot 10^{-1\phantom{0}}$ & $9\cdot 10^{-3\phantom{0}}$ \\
 $ 10^{-9\phantom{0}}$ & $10^{16}$           & $3 \cdot 10^{-1\phantom{0}}$ & NaN                            &  $4 \cdot 10^{0\phantom{-0}}$  & $4\cdot 10^{0\phantom{-0}}$ & $2\cdot 10^{-8\phantom{0}}$ &  $7\cdot 10^{-9\phantom{0}}$    & $1\cdot 10^{0\phantom{-0}}$ & $5\cdot 10^{1\phantom{-0}}$ \\
 $ 10^{-10}$           & $\infty$            &   NaN                        &       NaN                      & NaN                            & NaN                        & $2\cdot 10^{-7\phantom{0}}$ & $1\cdot 10^{-8\phantom{0}}$     &  $2\cdot 10^{4\phantom{-0}}$    & $2\cdot 10^{4\phantom{-0}}$ \\
\hline
\end{tabular}
\end{table*}

\section{Numerical Results \label{s6}}

First, we would like to check our theoretical derivations. To do so,
we apply the square-root algorithm introduced in
Theorem~\ref{theorem:1} to the following simple test problem.

\begin{exmp}
\label{ex:1} Consider the special case of the system~(1), (2) being
\begin{displaymath}
x_{k}= \left [
\begin{IEEEeqnarraybox}[][c]{c}
d_{k}  \\
s_{k}
\end{IEEEeqnarraybox} \right ]= \left [
\begin{IEEEeqnarraybox}[][c]{c/c/c}
1 & \Delta t  \\
0 & \  \  e^{-\Delta t  / \tau}
\end{IEEEeqnarraybox}
 \right ] x_{k-1}+ \left [
\begin{IEEEeqnarraybox}[][c]{c}
0  \\
1
\end{IEEEeqnarraybox}
 \right ] w_k, \;
 z_k= \left [
\begin{IEEEeqnarraybox}[][c]{c/c}
1 &\ 0
\end{IEEEeqnarraybox} \right ] x_k+v_k
\end{displaymath}
 where $w_k \sim N(0,I_2)$, $v_k \sim N(0,I_1)$, $I_n$ denotes the
$n\times n$ identity matrix and $\tau$ is a parameter which needs to
be estimated.
\end{exmp}

In our simulation experiment, we compute the negative Log LF and its
gradient by the proposed square-root method and, then, compare the
results to those produced by the conventional KF approach. The
outcomes of this experiments are illustrated by Fig.~\ref{fig1_a}
and \ref{fig1_b}.

As can be seen from Fig.~\ref{fig1_b}, all algorithms for score
evaluation produce exactly the same result and give the same zero
point that further coincides with the minimum point of the negative
Log LF (see Fig.~\ref{fig1_a}). All these evidences substantiate the
theoretical derivations of Section~\ref{score-complex}.

\begin{figure}
\centering
\includegraphics[width=2.5in]{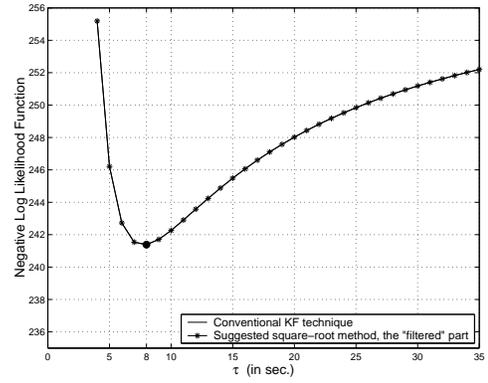}
\caption{The negative Log LF computed by the eSRCF and the
conventional KF for Example~\ref{ex:1}} \label{fig1_a}
\end{figure}

\begin{figure}
\centering
\includegraphics[width=2.5in]{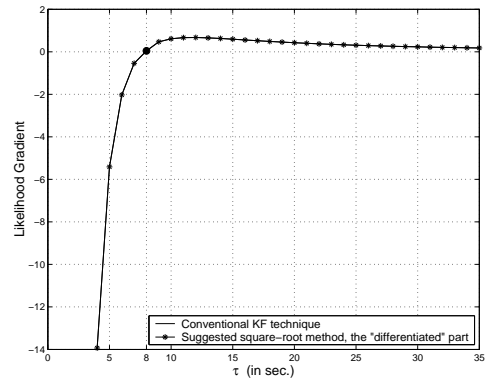}
\caption{ The Log LG computed by the proposed square-root method and
the conventional KF for Example~\ref{ex:1}} \label{fig1_b}
\end{figure}

Next, we wish to answer the second question posed in this paper:
does the algorithm for score evaluation derived from  numerically
stable square-root implementation method improve the robustness of
computations against roundoff errors? The previously obtained
results (Example~\ref{ex:1}) indicate that both methods, i.e. the
conventional KF technique and the new square-root algorithm, produce
exactly the same answer for the Log LF and Log LG evaluation.
However, numerically they no longer agree. We are now going to
explore the accuracy of the numerical algorithms.

To begin designing the ill-conditioned test problem we, first,
stress the type of the proposed method. As discussed in
Remark~\ref{rem:1}, the new square-root algorithm belongs to the
class of covariance-type methods. From Verhaegen and Van Dooren's
celebrated paper~\cite{VerhaegenDooren1986}, we know that the
condition number of the innovation covariance matrix $K(R_{e,k})$ is
the key parameter determining the numerical behavior of the
covariance algorithms. Taking into account these two important
facts, we construct the following ill-conditioned test problem.

\begin{exmp}
\label{ex:3} Consider the problem with the measurement sensitivity
matrix
\begin{displaymath} H_k=\left[
      \begin{IEEEeqnarraybox}[][c]{c/c/c}
      1 & 1 & 1 \\
      1 & 1 & 1+\delta
      \end{IEEEeqnarraybox}
     \right] \mbox{ and }
F_k=I_3, G_k=0, Q_k=I_1, R_k=\delta^2\theta I_2
\end{displaymath}
with $x_0 \sim {\cal N}(0,\theta I_3)$, where  $\theta$ is an
unknown system parameter. To simulate roundoff we assume that
$\delta^2<\epsilon_{roundoff}$, but $\delta>\epsilon_{roundoff}$
where $\epsilon_{roundoff}$ denotes the unit roundoff
error\footnote{Computer roundoff for floating-point arithmetic is
often characterized by a single parameter $\epsilon_{roundoff}$,
defined in different sources as the largest number such that either
$1+\epsilon_{roundoff} = 1$ or $1+\epsilon_{roundoff}/2 = 1$ in
machine precision. }.
\end{exmp}

When $\theta=1$, Example~\ref{ex:3} coincides with well-known
ill-conditioned filtering problem (see, for
instance,~\cite{Grewal2001}) and demonstrates how a problem that is
well-conditioned, as posed, can be made ill-conditioned by the
filter implementation.
 The difficulty to be explored is in matrix
inversion. As can be seen, although $rank \  H=2$, the matrix
$R_{e,1}$ is singular in machine precision that yields the failure
of the  conventional KF implementation. We introduced an unknown
system parameter $\theta$ making sure that the same problem is
applied to the matrix $(R_{e,1})'_{\theta}$ for each value of
$\theta$. Thus, both parts of the method for score evaluation, that
are the "filtered" and "differentiated" parts, fail after processing
the first measurement. From the discussion above we understand that
Example~\ref{ex:3} demonstrates the difficulty only for the
covariance-type methods.

Our simulation experiments presented below are organized as follows.
All methods were implemented in the same precision (64-bit floating
point) in MatLab where the unit roundoff error is $2^{-53}\approx
1.11 \cdot 10^{-16}$. The MatLab function $eps$ is twice the unit
roundoff error and $\delta=eps^{2/3}$ satisfies the conditions
$\delta^2<\epsilon_{roundoff}$ and $\delta>\epsilon_{roundoff}$ from
Example~\ref{ex:3}. We provide the computations for for different
values of $\delta$, say $\delta \in [10^{-9}eps^{2/3},
10^{9}eps^{2/3}]$. This means that we consider a set of test
problems from Example~\ref{ex:3}. The unknown system parameter
$\theta$ is fixed, say $\theta=2$. The exact answers are produced by
the Symbolic Math Toolbox of MatLab.

{\it Experiment 1:} In this experiment we are going to use the {\it
performance profile technique} to compare the conventional KF
approach for score evaluation with the square-root algorithm
introduced in this paper. The performance profile method was
developed by Dolan and Mor$\rm \acute{e}$~\cite{Dolan} to answer a
common question in scientific computing: how to compare several
competing methods on a set of test problem. Now, it can be found in
textbooks (see, for instance,~\cite{higham}).

In our simulation experiments we consider a set $A$ of $n=2$
algorithms, mentioned above. The performance measure, $t_a(p)$, is a
measure of accuracy. More precisely, $t_a(p)$ is the maximum
absolute error in Log LG computed for $7$ different values of
$\delta$. Thus, we consider a set $P$ of $m=7$ test problems from
Example~\ref{ex:3}; $\delta \in [10^{-2}, 10^{-3}, 10^{-4}, 10^{-5},
10^{-6}, 10^{-7}, 10^{-8}]$. According to the performance profile
technique, we compute the performance ratio
\begin{displaymath}
r_{p,a} = \frac{t_a(p)}{\min\{ t_{\sigma}(p): \sigma \in A\}} \ge 1,
\end{displaymath}
which is the performance of algorithm  $a$ on problem $p$ divided by
the best performance of all the methods (we mean a particular
implementation method for score evaluation) on this problem. The
performance profile of algorithm $a$ is the function
\begin{displaymath}
 \phi_a(\mu) =
\frac{1}{m} \times \mbox{ number  of } p \in P \mbox{ such  that  }
r_{p,a} \le \mu,
\end{displaymath} which is monotonically
increasing. Thus, $\phi_a(\mu)$ is the probability that the
performance of algorithm $a$ is within a factor $\mu$ of the best
performance over all implementations on the given set of problems.

\begin{figure}
\centering
\includegraphics[width=2.5in]{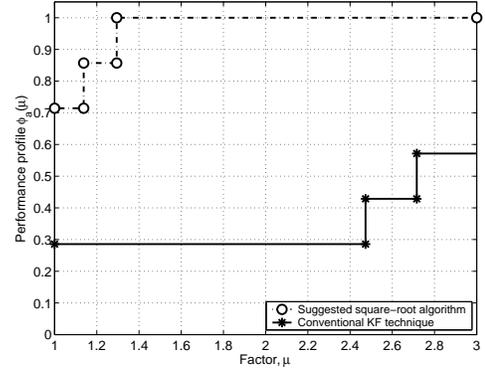}
\caption{Performance profiles of the methods for score evaluation:
the conventional KF implementation and the new square-root algorithm
proposed in this paper, -- on the set of test problems from for
Example~\ref{ex:3}. } \label{f:2}
\end{figure}

The results of this experiment are illustrated by Fig.~\ref{f:2}.
For each method, $\mu$ is plotted against the performance profile
$\phi_a(\mu)$, for $\mu \in [0, 3]$. We are now going to explain
Fig.~\ref{f:2}.

 Let us consider the left-hand side of
Fig.~\ref{f:2}, where $\mu=1$. We can say that the new square-root
algorithm proposed in this paper is the most accurate implementation
on $\approx 71\%$ of the problems, with the conventional KF being
accurate on $30\%$ of the problems. Next, we consider the middle of
the plot, looking where the curve first hit probability 1. We
conclude that the suggested square-root method is within a factor
$\mu \approx 1.3$ of being the most accurate implementation on every
test problem. However, the conventional KF approach for score
evaluation will never manage all $7$ problems (as $\delta \to
\epsilon_{roundoff}$, the machine precision limit, the test problems
become ill-conditioned). We need to increase $\mu$ to $\approx 2.7$
to be able to say that for $\approx 58\%$ of the test problems the
conventional KF provides an accurate Log LG evaluation within a
factor $\mu \approx 2.7$.

Thus, the performance profiles clearly indicate that on the set of
the test problems from Example~\ref{ex:3} the new square-root
algorithm derived in this paper provides more accurate evaluation of
the Log LG compared with the conventional KF approach.

{\it Experiment 2:} In this experiment we use the conventional KF
technique and the proposed square-root method to compute the maximum
absolute error in Log LF, denoted as $\Delta Log LF$, and its
gradient, denoted as $\Delta Log LG$. The results of this experiment
 are summarized in Table~\ref{tab:1}. We also present the maximum
absolute error among elements in matrices $P_1$ and
$\left(P_1\right)'_{\theta}$ (denoted as $\Delta P_1$ and $\Delta
P^{\prime}_1$, respectively) to explore the numerical behavior of
the "filtered" and "differentiated" parts of the methods for score
evaluation.

As can be seen from Table~\ref{tab:1}, the square-root
implementation of the Riccati-type sensitivity equation  degrades
more slowly than the conventional Riccati-type sensitivity recursion
as $\delta \to \epsilon_{roundoff}$, the machine precision limit
(see columns denoted as $\Delta P^{\prime}_1$). For instance, the
"filtered" (columns $\Delta P_1$) and "differentiated" (columns
$\Delta P^{\prime}_1$) parts of the proposed square-root method for
score evaluation maintain about $7$ and $8$ digits of accuracy,
respectively, at $\delta=10^{-9}$. The conventional KF technique
provides essentially no correct digits in both computed solutions.
Besides, it seems that the roundoff errors tend to accumulate and
degrade the accuracies of the Log LF and Log LG faster than the
accuracies of $\Delta P_1$ and $\Delta P^{\prime}_1$. Indeed, for
the same $\delta=10^{-9}$ we obtain no correct digits in the
computed solutions for all methods. In MatLab, the term 'NaN' stands
for 'Not a Number' that actually means the failure of the numerical
algorithm.

\begin{remark}
\label{rem:4} The results of Experiment~2 indicate that the new
square-root algorithm provides more accurate computation of the
sensitivity matrix $\left(P_k\right)'_{\theta}$ compared to the
conventional KF. Hence, it can be successfully used in all
applications where this quantity is required.
\end{remark}

\section{Concluding Remarks\label{conclusion}}

In this paper, a numerically stable square-root implementation
method for KF formulas, the eSRCF, has been extended in order to
compute the Log LG for linear discrete-time stochastic systems. The
preliminary analysis indicates that the new algorithm for score
evaluation provides more accurate computations compared with the
conventional KF approach. The new result can be used for efficient
calculations in sensitivity analysis and in gradient-search
optimization algorithms for the maximum likelihood estimation of
unknown system parameters.

As an extension of the eSRCF, the new method for score evaluation is
expected to inherit its benefits. However, the question about
suitability for parallel implementation is still open.

It can be mentioned that another approach to construct numerically
stable implementation method for score evaluation is to use the $UD$
filter~\cite{Bierman77}. Being the modification of the square-root
implementations, the $UD$-type algorithms improve the robustness of
computations against roundoff errors, but compared with SRF,  the
$UD$ filter reduces the computational cost (see~\cite{Bierman77},
\cite{VerhaegenDooren1986},  \cite{KailathSayed2000}). As mentioned
in~\cite{BBVP1990} and as far as this author knows, it is still not
known how to use the $UD$ filter to compute the score.




\begin{thebibliography}{13}
\bibitem{Schweppe1965}
F.C.~Schweppe, "Evaluation of likelihood functions for Gaussian
signals", {\it IEEE Trans. Inf. Theory}, vol.~IT-11, pp.~61--70,
Jan.~1965.

\bibitem{Grewal2001}
M.S.~Grewal  and A.P.~Andrews, {\it Kalman Filtering: Theory and
Practice Using MATLAB}, 2nd ed., Ed. New York: John Wiley \& Sons,
2001.

\bibitem{Bierman77}
G.J.~Bierman, {\it Factorization Methods for Discrete  Sequential
Estimation}. Ed. New York: Academic Press, 1977.

\bibitem{KaminskiBryson1971}
P.G.~Kaminski, A.E.~Bryson and S.F.~Schmidt, "Discrete square-root
filtering: a survey of current techniques", {\it IEEE Trans. Autom.
Control}, vol.~AC-16,  pp.~727--735, Dec. 1971.

\bibitem{KailathSayed2000}
T.~Kailath, A.H.~Sayed and B.~Hassibi, {\it Linear Estimation}, Ed.
New Jersey: Prentice Hall, 2000.

\bibitem{VerhaegenDooren1986}
M.~Verhaegen and P.V.~Dooren, "Numerical aspects of different Kalman
filter implementations", {\it IEEE Trans. Autom. Control},
vol.~AC-31, pp.~907--917, Oct. 1986.

\bibitem{Kailath_slides}
T.~Kailath, "Array algorithms for structured matrices", presented at
the conference of the International Linear Algebra Society,
Winnipeg, USA, June 3-6, 1998.

\bibitem{ParkKailath1995}
P.~Park and T.~Kailath, "New square-root algorithms for Kalman
filtering", {\it IEEE Trans. Autom. Control}, vol.~40, pp.~895--899,
May 1995.

\bibitem{Lee1993}
E.K.B.~Lee and S.~Maykin, "Parallel implementation of the extended
square-root covariance filter for tracking applications", {\it IEEE
Trans. Parallel Distrib. Syst.}, vol.~4, pp.~446--457, April 1993.

\bibitem{BBVP1990}
G.J.~Bierman, M.R.~Belzer, J.S.~Vandergraft and D.W.~Porter,
"Maximum likelihood estimation using square root information
filters", {\it IEEE Trans.  Autom. Control}, vol.~35,
pp.~1293--1298, Dec. 1990.

\bibitem{Dyer1969}
P.~Dyer and S.~McReynolds, "Extensions of square root filtering to
include process noise", {\it J. Opt. Theory Appl.}, vol.~3,
pp.~444--459, June~1969.

\bibitem{Dolan}
E.D.~Dolan and J.J.~Mor$\rm \acute{e}$, "Benchmarking optimization
software with performance profiles", {\it Math. Programming},
vol.~91:201-213, pp.~320--348, 2002.

\bibitem{higham}
D.J.~Higham  and N.J.~Higham, {\it MatLab Guide}, 2nd ed., Ed.
Philadelphia: SIAM, 2005.
\end{thebibliography}
\end{document}